\documentclass[aps,pra,amsmath,amssymb,amsfonts,superscriptaddress,floatfix,nofootinbib]{revtex4}
\pdfoutput=1
\usepackage{algorithm}
\usepackage{amsmath,amsthm,amsfonts,amssymb}
\usepackage{graphicx}

\usepackage{color}
\usepackage{hyperref}

\newtheorem{theorem}{Theorem}

\newtheorem{lemma}{Lemma}

\usepackage[capitalize,nameinlink]{cleveref}

\newcommand{\be}{\begin{equation}}
\newcommand{\ee}{\end{equation}}

\newcommand{\psiinputun}{\Psi_{\rm input}^{\rm unnormalized}}

\newcommand{\psiideal}{\Psi_{\rm ideal}}
\newcommand{\expec}{\mathbb{E}}
\newcommand{\nboseq}{n_{bos}^{eq}}

\newcommand{\vsig}{v_{\rm sig}}
\newcommand{\Tspike}{T_0}
\newcommand{\Tplus}{T^+}
\newcommand{\Tminus}{T^-}
\newcommand{\nbos}{{n_{bos}}}

\newcommand{\Pis}{\Pi_{\rm symm}}
\newcommand{\rhosymmmm}{\rho_{\rm symm,mm}}

\newcommand{\emax}{E_{max}}
\newcommand{\emaxfuture}{E_{max,future}}
\newcommand{\dem}{E'}
\newcommand{\normmin}{P_{>}}
\newcommand{\ecut}{E_{cut}}
\newcommand{\Psisig}{\Psi_{\rm sig}}
\newcommand{\psiinput}{\Psi_{\rm input}}

\newcommand{\prob}{{\rm Pr}}

\begin{document}

\title{Accelerating Classical and Quantum Tensor PCA}
\author{Matthew B.~Hastings}
\begin{abstract}
Spectral methods are a leading approach for tensor PCA with a ``spiked" Gaussian tensor\cite{wein2019kikuchi,hastings2020classical}.  The methods use the spectrum of a linear operator in a vector space with exponentially high dimension and in Ref.~\cite{hastings2020classical} it was shown that quantum algorithms could then lead to an exponential space saving (in a fairly standard way) as well as a \emph{quartic} speedup over classical.  Here we show how to accelerate both classical and quantum algorithms quadratically, while maintaining the same quartic separation between them.  That is, our classical algorithm here is quadratically faster than the original classical algorithm, while the quantum algorithm is eigth-power faster than the original classical algorithm.  We then give a further modification of the quantum algorithm, increasing its speedup over the modified classical algorithm to the sixth power, so that it is twelth-power faster than the original classical algorithm.
We only prove these speedups for detection, rather than recovery, but we give a strong plausibility argument that our algorithm achieves recovery also.  

Note added: After this paper was prepared, A. Schmidhuber pointed out to me Ref.~\cite{kothari2026smooth}.  This reference improves the best existing bounds on the spectral norm of a certain random operator.  Because the norm of this operator enters into the runtime of the previous algorithms, with this improvement on the norm, we no longer have a provable polynomial speedup.  Our results are phrased in terms of certain properties of the spectrum of this operator (not merely the largest eigenvalue but also the density of states).  So, if these properties still hold, the speedup still holds.  However, due to the improvement in the norm bound, we no longer know that the density of states bound that we need holds compared to the best norm bound.  Rather than modify the paper, I have left it unchanged but added a section at the end discussing the needed property of density of states and considering for which problems (there are several problems for which this kind of quartic quantum speedup has been used and the techniques here will likely be applicable to several of them) the property is likely to hold.
\end{abstract}
\maketitle

In \cite{wein2019kikuchi,hastings2020classical}, a spectral method was proposed for tensor principal component analysis (tensor PCA).  In this problem, we are given a $p$-index tensor which is equal to Gaussian noise plus the $p$-th tensor power of some fixed vector, and the problem is to recover this vector (called the ``spike" vector).  A slightly simpler problem is to distinguish the case where the tensor is equal to this $p$-th tensor power added to noise (called the ``spiked" case) from the case where the tensor is equal to Gaussian noise with nothing added (called the ``unspiked" case).
The spectral method gives a polynomial algorithm, with the degree of the polynomial depending on the signal-to-noise ratio.
Since the spectral problem involves a high-dimensional linear operator $H$ which can be regarded as a Hamiltonian of a quantum system on some large number of particles (the number depending on the signal-to-noise ratio), a quantum algorithm is a natural possibility and \cite{hastings2020classical} also proposed such an algorithm.  This algorithm led to a \emph{quartic} speedup over classical.  This speedup comes from combining a standard quadratic speedup from amplitude amplification with an additional quadratic speedup from a chosen input state into a phase estimation algorithm.  
To understand this chosen input state, note that the goal of the spectral algorithm is to estimate the largest eigenvalue of $H$.  For a maximally mixed input state, the probability of projecting onto the corresponding eigenstate using phase estimation is inverse in the dimension of the Hilbert space.  However, the chosen input state gives us a state with some nontrivial overlap with the eigenstate with largest eigenvalue, increasing this probability.

One may compare this chosen input state to what happens in
some cases (such as in applications in quantum chemistry) where again we are interested in computing an eigenvalue of some Hamiltonian to high precision but we have an input state (such as a Hartree-Fock state) which has some large overlap with the ground state so that phase estimation succeeds with probability close to $1$.  Here, the success probability is much smaller, but still nontrivial.

While this particular tensor PCA problem may seem somewhat specialized, there are more general reasons to be interested in it.  For one thing, the quartic quantum speedups found have been generalized to other problems with planted solutions\cite{schmidhuber2025quartic1,schmidhuber2025quartic2} recently, so this may be a more general phenomenon.
Further, in these planted problems, the solution found by a quantum computer may be efficiently verified on a classical computer, so they are a natural target problem for early noisy quantum computers, and while a quadratic speedup is perhaps too small to be useful in practice, a quartic or higher speedup may be useful.
Another reason to be interested in the problem is that the spectral algorithm for tensor PCA is closely related to some algorithms for polynomial optimization on the sphere\cite{lovitz2025hierarchy,moreno2025spectral}, so improvements (quantum or classical) for these spectral algorithms may apply more broadly.

We should be cautious in that in \cite{gupta2025classical}, a quadratic speedup was found for the classical algorithm of \cite{schmidhuber2025quartic1} \emph{in the limit of a certain parameter being large}, so that (at least in that parameter range for that problem), the quantum algorithm presents only a quadratic speedup over classical.  However, in many regimes, the quartic speedup over the best-known classical algorithm survives.   Further, Ref.~\cite{schmidhuber2025quartic1} introduced the nice concept of an ``Alice theorem" to help quantify this speedup.  Namely, their idea was that, since the spectral algorithm's correctness is proven based on certain properties of the spectrum of a random linear operator drawn from a certain distribution, and so accordingly improvements in the bounds on that  spectrum could lead to improvements in the proven runtime of the classical algorithm, they wanted to show a general way to import any such improvements to the runtime of both classical and quantum algorithms to maintain the separation.  So, an Alice theorem was defined to be a certain type of spectral bound on the linear operator, such that if proven, one could obtain certain performance guarantees on both quantum and classical algorithms.  Currently, we have a certain Alice theorem, but that theorem may be improved in future.

In this paper, we introduce a different way to speedup both quantum and classical algorithms quadratically, while keeping the same quartic separation.  Our improvement does not rely on any improvement to the Alice theorem.  Rather, we use a different algorithm.  To explain this algorithm, let us first recall how the previous algorithms work.  In all the previous spectral works cited above, the correctness of the algorithm is based on proving that in the unspiked case, the linear operator had some bound on its large eigenvalue with high probability, while in the spiked case there is, with high probability, some eigenvalue greater than this bound so that by computing the largest eigenvalue of the operator one can distinguish spiked from unspiked (called ``detection") and, with a little more work determine the vector in the spiked case (called ``recovery").  To obtain this property of the spectrum, we had to adjust some parameter, which in Ref.~\cite{hastings2020classical} and this paper we call $\nbos$, to be sufficiently large.   The dimension of the Hilbert space depends exponentially in $\nbos$.
Classically, the spectral algorithm then simply used Lanczos or similar method with a time then roughly equal to the Hilbert space dimension (other terms such as the time per Lanczos step were negligible in comparison; precisely the time was bounded by the Hilbert space dimension to the power $1+o(1)$).
The quantum algorithm then used a chosen input state which had some nontrivial overlap with this large eigenvalue.  \emph{Very} roughly speaking, the chosen input state was equal to the vector with largest eigenvalue plus some noise, and the input state has some small but nontrivial signal-to-noise ratio.  The  
nontrivial signal-to-noise ratio meant that if one took this vector and applied phase estimation there was some nontrivial chance (roughly the square-root of the Hilbert space dimension, so quadratically better than random) of obtaining the desired large eigenvalue.  Using this chosen input state and amplitude amplification gave a runtime which was the fourth-root of the Hilbert space dimension, again up to a power $1+o(1)$ due to additional time to perform phase estimation with sufficient accuracy and prepare the input state (in this paper, we neglect all details of phase estimation and state preparation as they are handled by the same techniques as before).

Now we may explain how the quadratic speedup in this paper works.  We consider a parameter $\nbos$ which is only \emph{half} as large (up to multiplicative factor $1+o(1)$) as that needed to prove correctness of the previous spectral algorithm.  Then, we (approximately) project onto the eigenspace of $H$ with eigenvalue greater than some value; that value is chosen small enough that a certain ideal state (which encodes the spike) will be included in this projection but otherwise taken as large as possible.  This projection is approximate and is handled either by Lanczos or similar sparse matrix-vector methods classically or by phase estimation quantumly.  There are other eigenvalues in this eigenspace as $\nbos$ is not large enough for this to be the only such eigenvalue (in the previous spectral algorithm, there was only one eigenvalue in this eigenspace in the spiked case, and no eigenvalues in the unspiked case).  However, the dimension of this eigenspace is quite small compared to the dimension of the full Hilbert space of $H$.  So, a random vector has only a small amplitude to survive this projection.  Thus, if we take the chosen input vector, the ideal state will survive this projection while the random part will be small in comparison.

Note that this speedup works for both classical and quantum algorithms.  The classical spectral algorithm survives as one can use Lanczos to approximately project the chosen input vector into this eigenspace, with a quadratic speedup due $\nbos$ being only half as large so that the matrix-vector operations are faster.  At the same time the quantum algorithm also gets a quadratic speedup: the signal-to-noise ratio of the chosen input state is still nontrivial and so approximate projection of this chosen input state succeeds with probability inverse in the square-root of the dimension of the Hilbert space, which can then be wrapped with amplitude amplification.

In the classical algorithm, one then considers the amplitude of the input vector after the approximate projection.  In the unamplified quantum algorithm, one considers the probability that the state survives the approximate projection by phase estimation.

In this construction, we re-use a beautiful trick introduced in \cite{schmidhuber2025quartic1}.  To explain this trick, let us recall the chosen input state of \cite{hastings2020classical}.  The first guess at a chosen input state is, in a sense, ``the only thing you could write down".  Namely, we regard the input tensor as a vector on $p$ bosons and we take the symmetrized $(\nbos/p)$-th tensor power of this to define a state on $\nbos$ bosons.  However, we run into a technical difficulty in that that state is equal to some power of the spike vector plus noise and that noise is \emph{correlated} with the Hamiltonian, making it difficult to use statistical methods.  In \cite{hastings2020classical}, a complicated procedure based on anti-concentration of Gaussian polynomials using some perturbation of this input state was introduced to avoid this.  However, in \cite{schmidhuber2025quartic1}, a much simpler trick was introduced, adding more noise to the tensor used to the define the Hamiltonian and to the tensor used to define the input state, with these noises being \emph{anti-correlated} with each other, so that the input state becomes \emph{decorrelated} from the Hamiltonian (a technical trick is that only small noise is added to that to define the Hamiltonian, where the added noise would greatly hurt the properties of the algorithm, while larger noise is added to the input state where it has less effect).
In this way, we can use more straightforward statistical methods, with only one difficulty arising from the input state being given by taking a spike, adding noise, and then taking the $(\nbos/p)$-th tensor power, which is different from the simpler cases where we take the $\nbos$-th tensor power of the spike and add noise.

While this speedup retains the quartic separation between quantum and classical,
we later give a further speedup of the quantum algorithm.  This speedup is based on replacing the projector on the given eigenspace with a sequence of projectors.  In the simplest algorithm, we consider splitting the system in half again, and apply approximate projectors on each half.  Any failure on a single half means we build that state again, and only when both halves succeed do we build the full system.  We can in fact make this recursive by building each half in terms of two quarters, etc...  This is explained in more detail in \cref{multistep}.

We will give something similar to the idea of an ``Alice theorem", by giving a general statement about the spectrum of a certain random linear operator.  In particular, we consider the number of eigenvalues which are greater than or equal to some given fraction of a bound on maximum eigenvalue.  These
bounds are currently proven if we consider the number of eigenvalues which are greater than or equal to some given fraction of \emph{our current best} bound on the maximum eigenvalue.  We state it in this form so that if the bound on largest eigenvalue improves, it gives a clear statement of what bound on the spectrum is needed.

We will mostly follow the notation of \cite{hastings2020classical} and the reader should be familiar with that paper.  We use the Hamiltonian of that paper, while other papers consider slightly different, but related, Hamiltonians.  In some cases we will avoid symmetrizing the Hilbert space and when this is done we will state it explicitly but generally we will use the symmetrized Hilbert space of that paper.  In all cases we consider even $p$, and we will largely focus on the case $p=4$ for simplicity. 

All results involving asymptotic speedups and all big-O notation assumes that we the limit
of large $N$ at fixed
 $N^{-p/4}/\lambda$.  For $p=4$, this is at fixed $\lambda N$, i.e., $\lambda \propto N^{-1}$.
We will say that a quantity is $\tilde O(N^{k})$, for some $k$ which is a polynomial in $\nbos$ if the time is bounded by $N^{k}\cdot N^{o(\nbos)}$ in this limit of large $\nbos$.  In particular, any fixed polynomial in $N$ is $\tilde O(1)$.

In \cref{background}, we give the background on the problem and previous results on the spectrum.  We straightforwardly extend some previous results to consider the number of states with eigenvalue in some range.
In \cref{algorithm}, we give both classical and quantum algorithms with quadratic speedup over previous results.
In \cref{ideal}, we show that a certain ``ideal" input state, namely the $\nbos$-th tensor power of the spike vector, has most of its amplitude in the eigenspace projected onto in the given algorithm.
In \cref{decorrnoise}, we finish the analysis of the algorithm of \cref{algorithm} by considering the actual chosen input state with noise.
In \cref{multistep}, we consider a multistep algorithm with further quantum speedup.
All the results up to there are for detection rather than recovery.  In \cref{recovery}, we discuss recovery and give a strong plausibility argument that our algorithm achieves recovery also.

Our main result, proven as a result of results in \cref{ideal,decorrnoise} is that, with high probability, algorithms \cref{classquad,qquadunamp,qquadamp} correctly answer the question of detection.

\section{Background Results}
\label{background}
Let us first recall the problem.

Let $\vsig$ be a vector in ${\mathbb R}^N$, normalized by $| \vsig| =\sqrt{N}$, chosen from some probability distribution; this is the ``signal vector".  
We pick it from a Haar uniform distribution.  The reason for the normalization that  $| \vsig| =\sqrt{N}$, which is standard in the literature, is that if we choose each component of the vector independently from a Gaussian distribution with vanishing mean and unit variance, this gives, with high probability, a norm equal to $\sqrt{N} \cdot (1+o(1))$.

Let $G$ be a real tensor of order $p$ with entries chosen from a Gaussian distribution with vanishing mean and unit variance.  Precisely, we obtain this tensor by choosing an \emph{unsymmetrized} tensor with independent entries.
Since the tensor $\vsig^{\otimes p}$ is symmetric, of course it is natural to replace $\Tspike$ by its symmetrization.  Indeed, no information can be lost by this replacement since given a tensor $\Tspike$ one can symmetrize the tensor, and then add back in Gaussian noise chosen to vanish under symmetrization to obtain a tensor drawn from the same distribution as $\Tspike$ was.
That is, the cases in which $G$ is symmetrized or not can be reduced to each other.  We restrict to the symmetric case throughout.

We let $\Tspike$ be given by
$$\Tspike=\lambda \vsig^{\otimes p} + G,$$ where $\vsig^{\otimes p}$ is defined to be the tensor with entries $$(\vsig^{\otimes p})_{\mu_1,\ldots,\mu_p}=\prod_{a=1}^p (\vsig)_{\mu_a}.$$
Here we use the notation that a subscript on a vector denotes an entry of that vector; we use a similar notation for matrices later.
We take either $\lambda=\overline \lambda$ for some known $\overline \lambda$ or we take $\lambda=0$,
depending on whether we are in the spiked or unspiked case,

A generalization of this problem is the case in which $G$ is chosen to have complex entries, with each entry having
real and imaginary parts chosen from a Gaussian distribution with vanishing mean and variance $1/2$.  We refer to this as the complex ensemble, while we refer to the case where $G$ has real entries as the real ensemble; the choice of reducing the variance to $1/2$ is a convenient normalization for later.  It is clear that since $\vsig$ is real, the case of complex $G$ can be reduced to the real case (up to an overall rescaling for the different variance) simply by taking the real part of $\Tspike$, and similarly the real case can be reduced to the complex case (again up to an overall rescaling) by adding Gaussian distributed imaginary terms to the entries of $\Tspike$.  (Of course, one might all consider the case where $\vsig$ and $G$ are both complex; this changes some numerical factors but most of the results remain the same; we omit this case.)
We assume in this paper for simplicity that $G$ is drawn from the real case, but for analysis later it is convenient to consider the complex case.

Given a tensor $T$ with even $p$, define a ``Hamiltonian" $H$ by
\be
\label{firstquantized}
H(T)  =\frac{1}{2}\sum_{i_1,\ldots,i_{p/2}}  \Bigl( \sum_{\mu_1,\ldots,\mu_p} T_{\mu_1,\mu_2,\ldots,\mu_p} 
|\mu_1\rangle_{i_1}\langle \mu_{1+p/2}| \otimes 
|\mu_2\rangle_{i_2}\langle \mu_{2+p/2}| \otimes  \ldots \otimes
|\mu_{p/2}\rangle_{i_{p/2}}\langle \mu_{p}| + {\rm h.c.} \Bigr).
\ee
This Hamiltonian acts on a Hilbert space which is $({\mathbb C}^N)^{\otimes \nbos}$, for some given quantity $\nbos$.  The basis vectors of this Hilbert space are of the form $|\mu_1\rangle_1 \otimes |\mu_2\rangle_2 \otimes \ldots \otimes |\mu_{\nbos}\rangle_{\nbos}$ where each $\mu_1$ ranges from $1$ to $N$.  The notation above indicates that it acts on $p/2$ of the tensor factors in this tensor product (namely, the $i_1,\ldots,i_{p/2}$ tensor factors), while it is implicitly diagonal in the remaining tensor factors.
The sum over $i_1,\ldots,i_{p/2}$ is from $1$ to $\nbos$, and the $i_1,\ldots,i_{p/2}$ are all distinct from each other, while the sum over $\mu_1,\ldots,\mu_p$ is over $1$ to $N$.

We may refer to this as a ``first quantized Hamiltonian", in that it is written in what would be term a first quantized form in physics.

We may restrict to the symmetric subspace of this Hilbert space.  
This restriction motivates us calling $\nbos$ the ``number of bosons", as in this subspace they are what would be called bosons in physics.
We write
$D(N,\nbos)$ to indicate the dimension of this subspace.  For $N\gg \nbos$, we can approximate $D(N,\nbos) \approx N^\nbos/\nbos!$.

In some cases later we do not restrict to the symmetric subspace.  In this case we say that we work in the ``full" Hilbert space.

Within the symmetric subspace, we can write this Hamiltonian in a so-called ``second-quantized" form:
\be
\label{sqeven}
H(T)=\frac{1}{2} \Bigl(\sum_{\mu_1,\ldots,\mu_p} T_{\mu_1,\mu_2,\ldots,\mu_p} 
\Bigl(\prod_{i=1}^{p/2} a^{\dagger}_{\mu_i} \Bigr)
\Bigl(\prod_{i=p/2+1}^p  a_{\mu_i}\Bigr)+{\rm h.c.}\Bigr).
\ee
This replacement by a second-quantized Hamiltonian is simply a convenient notation.
The operators $a^\dagger_\mu,a_\mu$ are bosonic creation and annihilation operators, obeying canonical commutation relations
$[a_\mu,a^\dagger_\nu]=\delta_{\mu,\nu}$.  We restrict to the subspace with a total of $\nbos$ bosons, i.e.,
we define the number operator $n$ by
\be
n\equiv \sum_\mu a^\dagger_\mu a_\mu,
\ee
and restrict to $n=\nbos.$

Now let us recall the spectral algorithm of \cite{hastings2020classical} for this problem (closely related to the other spectral algorithms for this problem).
This algorithm takes a tensor $\Tspike$ as input and also a scalar $\overline \lambda$ and an integer $\nbos$.  The output is a decision about whether $\lambda=\overline \lambda$ or $\lambda=0$, and, if the algorithm reports that $\lambda=\overline \lambda$, it also returns an approximation of $\vsig$ (up to an overall sign).
This algorithm is a classical algorithm which requires computing the leading eigenvalue of the Hamiltonian; quantum algorithms replace this with phase estimation with a chosen input state.  We consider this Hamiltonian in the symmetric subspace (though for quantum algorithms it may be slightly simpler to run in the full Hilbert space without symmetrization).

The quantity $\nbos$ is chosen depending upon the value of $\overline \lambda$; smaller values of $\lambda$ require larger values of $\nbos$
 so that the quantity $\emax$ will be smaller than $E_0$.
Indeed, for $E_0\geq (1+c) \emax$ for any $c>0$, the algorithm achieves recovery.
This quantity $E_0$ is a lower bound on the largest eigenvalue in the case $\lambda=\overline \lambda$, obtained by a straightforward variational principle.
The quantity $\emax$ is an upper bound that holds with high probability on the largest eigenvalue of $H$ in the case $\lambda=0$, up to subleading corrections, see \cref{eigbthm}.

\begin{algorithm}
\caption{Spectral algorithm.}  \begin{itemize}
\item[1.] Compute the eigenvector of $H(\Tspike)$ and the leading eigenvalue, denoted $\lambda_1$.

\item[2.] (Detection) If $$\lambda_1>\ecut\equiv (E_0+\emax)/2,$$ 
where $E_0=\overline \lambda (p/2)! {\nbos \choose p/2} N^{p/2}$ for even $p$, and $E_0=\overline \lambda^2 (p-1)! {\nbos \choose p-1} N^p$ for odd $p$, and
where $\emax$ is defined in theorem \ref{eigbthm}, then report that $\lambda=\overline \lambda$.  Otherwise report $\lambda=0$.

\item[3.] (Recovery) Compute the single particle density matrix (defined in Ref.~\cite{hastings2020classical}) of the leading eigenvector or of any vector in the eigenspace of eigenvalue $\geq \ecut$.  Apply the randomized recovery algorithm, given as Algorithm 1 in \cite{hastings2020classical}, to recover an approximation to $\vsig$.
\end{itemize}
\label{specalg}
\end{algorithm}

We have the following theorem from \cite{hastings2020classical}:
\begin{theorem}
\label{eigbthm}
Let $\lambda_1$ be the largest eigenvalue of $G$.
Let $$\emax=\sqrt{2J \log(N)} \nbos^{p/4+1/2} N^{p/4},$$
where 
$J$ is a scalar depends that implicitly on $p,\nbos,N$ and
tends to some function depending only on $p$ for large $\nbos,N$.  More precisely, for even $p$, $J$ 
 is equal to $(p/2)! {\nbos \choose p/2!}/\nbos^{p/2}+o(1)$ for the real ensemble and is twice that for the complex ensemble, and for odd $p$, $J$ is equal to that for the even case for $2(p-1)$.

Then,
for any $x$, assuming Assumption 1 of Ref.~\cite{hastings2020classical} (this assumption is implicit in our scaling limit),
\be
\prob[\lambda_1\geq x] \leq \exp\Bigl(-\frac{x-\emax}{\xi}\Bigr),
\ee
with
\be
\xi=\frac{\sqrt{J} \nbos^{p/4-1/2} N^{p/4}}{\sqrt{2\log(N)}}.
\ee

So, for any $\dem$ which is $\omega(\xi)$, with high probability $\lambda_1\leq \emax+\dem$.
\end{theorem}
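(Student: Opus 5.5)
The bound comes from the trace (moment) method applied to the unspiked Hamiltonian $H(G)$: control $\expec[{\rm tr}\,\exp(\beta H(G))]$ by Gaussian integration, then apply Markov's inequality at a well-chosen $\beta$. Since $\lambda_1\leq \beta^{-1}\log {\rm tr}\,\exp(\beta H)$, we get for any $\beta>0$
\be
\prob[\lambda_1\geq x]\leq e^{-\beta x}\,\expec\bigl[{\rm tr}\,\exp(\beta H(G))\bigr].
\ee
Write $H(G)=\sum_\alpha G_\alpha O_\alpha$, where $\alpha$ runs over independent Gaussian entries of $G$ and each $O_\alpha$ is the Hermitian operator in \cref{sqeven} multiplying that entry. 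For the complex ensemble, split each entry into real and imaginary parts, which gives twice as many terms each of variance $1/2$. This is why $J$ doubles.

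The key step is to bound the expectation by
\be
\expec\bigl[{\rm tr}\,e^{\beta H}\bigr]\leq D(N,\nbos)\,\exp\Bigl(\tfrac{\beta^2}{2}\bigl\|\textstyle\sum_\alpha O_\alpha^2\bigr\|\Bigr),
\ee
using Gaussian interpolation or the Golden--Thompson-type argument of \cite{hastings2020classical}. If the exponential were scalar, this would be exact Gaussian integration. Here the $O_\alpha$ do not commute, so one expands ${\rm tr}\,e^{\beta H}$ in moments and uses Wick's theorem. Commuting contracted pairs gives the leading term $\sum_\alpha O_\alpha^2$. The noncommuting corrections are subleading in the scaling limit, and that is exactly what Assumption 1 of \cite{hastings2020classical} controls. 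I expect this to be the main technical obstacle, and I would follow the earlier paper's bound on commutator corrections rather than redo it.

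Next I would compute $\|\sum_\alpha O_\alpha^2\|$. Each $O_\alpha$ moves $p/2$ bosons among modes. Summing $O_\alpha^2$ over all index tuples gives a combinatorial count: the number of ways to choose $p/2$ of the $\nbos$ bosons, times the ordering factor $(p/2)!$, times $N^{p/2}$ choices for the new indices. This yields $\|\sum_\alpha O_\alpha^2\|\approx J\,\nbos^{p/2}N^{p/2}$, with $J$ as stated. For odd $p$, the Hamiltonian is built from a contraction of two copies of $T$, which behaves like the even case with $2(p-1)$ indices. Using $\log D(N,\nbos)\approx \nbos\log N$, we obtain
\be
\prob[\lambda_1\geq x]\leq \exp\Bigl(\nbos\log N+\tfrac{\beta^2}{2}J\nbos^{p/2}N^{p/2}-\beta x\Bigr).
\ee

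Finally, choose $\beta$ to minimize the value at $x=\emax$:
\be
\beta=\frac{\sqrt{2\nbos\log N}}{\sqrt{J}\,\nbos^{p/4}N^{p/4}}.
\ee
At this $\beta$, the first two terms together equal $2\nbos\log N$, and $\beta\,\emax$ equals $2\nbos\log N$ as well, so the exponent vanishes at $x=\emax$. For general $x$, the bound becomes $\exp(-\beta(x-\emax))$, and one checks that $1/\beta=\xi$. The final claim follows because $\dem=\omega(\xi)$ makes $\exp(-\dem/\xi)\to 0$.
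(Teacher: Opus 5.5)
Your proposal is correct and follows essentially the route the paper relies on. The theorem is imported from \cite{hastings2020classical}, and its proof, as recalled after \cref{fcbound}, bounds $Z(\tau)=\expec[{\rm tr}\exp(\tau H)]$ by a matrix-Chernoff-type estimate, applies Markov's inequality, and optimizes $\tau$ so that $1/\tau=\xi$, which is exactly your computation.

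One correction to how you describe the key step: the Gaussian-series bound $\expec\,{\rm tr}\,e^{\beta H}\leq {\rm tr}\exp\bigl(\tfrac{\beta^2}{2}\sum_\alpha O_\alpha^2\bigr)$ holds exactly, by Lieb's concavity theorem or by bounding each Wick pairing by ${\rm tr}[(\sum_\alpha O_\alpha^2)^m]$, so noncommutativity produces no corrections that need controlling. The scaling limit is needed only to evaluate $\|\sum_\alpha O_\alpha^2\|$ and identify the limiting $J$; note also that a plain Golden--Thompson (Ahlswede--Winter) iteration would only give $\sum_\alpha\|O_\alpha\|^2$ and be far too weak.
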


Now consider the eigenvectors and eigenvalues of $H(\Tspike)$.  
For any $q$, for any symmetric tensor $T$ of order $q$, let $|T\rangle$ be
the vector on $q$ qudits (each of dimension $N$) with amplitudes given by 
the entries of the tensor in the obvious way:
$$|T\rangle=\sum_{\mu_1,\ldots,\mu_{q}} T_{\mu_1,\ldots,\mu_{q}}
|\mu_1\rangle \otimes \ldots \otimes |\mu_{q}\rangle.$$
This vector is only normalized if $|T|=1$.
So, $\Psisig\equiv N^{-\nbos/2} |\vsig^{\otimes \nbos}\rangle$ is a normalized vector.

We have the following simple property:
\begin{lemma}
\label{collect}
Let $\lambda_1$ be the largest eigenvalue of $H(\Tspike)$.  
Then, $\lambda_1 \geq E \equiv \langle \Psisig | H(T) |\Psisig \rangle$.
\begin{proof}
Immediate from the variational principle.
\end{proof}

\end{lemma}

Thus, we have two different scaling results.  We have, from \cref{eigbthm}, that if $\lambda=0$, then with high probability the largest eigenvalue is bounded by $\nbos^{3/2 + o(1)}$.  On the other hand, for nonzero $\lambda$, the largest eigenvalue is proportional to $\lambda \nbos^2$.  Considering the point at which these two values cross gives the scaling of the needed $\nbos$.  
Let $\nboseq$ be the point at which both values cross.  Indeed, it suffices to chose $\nbos=\nboseq \cdot (1+c)$ for any $c>0$ (indeed, we may take $c=o(1)$) for the algorithm to work.

Now, it is certainly possible that the bounds in \cref{eigbthm} might be improved later, both in terms of prefactor and possibly even in terms of scaling with $\nbos$.  In order to state our results on the new algorithm later in a way that will survive changes in these bounds, let us introduce a \emph{scaling exponent}, denoted $\gamma$.  We will assume that the largest eigenvalue is, with high probability bounded by $\emaxfuture\cdot (1+o(1)$, where we define
\be
\emaxfuture=f(N) \nbos^{\gamma},
\ee
for some $f(N)$.
Call this the scaling assumption on the largest eigenvalue.
Here we right $\emaxfuture$ rather than $\emax$ to emphasize that this is some possible bound proven in the future, improving on existing bounds.
The bounds proven in \cref{eigbthm}, hold for $f(N)=\sqrt{2J \log(N)} N$ in the case $p=4$ and
\be
\gamma=3/2.
\ee
We also need later an assumption on the probability that a random eigenvalue is some constant fraction larger than $\emaxfuture$.  That is, let $\Pi^>(\nbos,x)$ project onto the eigenspace with eigenvalue $\geq x \emaxfuture$ for scalar $x$.  Then, define $P^>(\nbos,x)=\expec[{\rm tr}(\Pi^>)/{\rm tr}(I)]$, where $I$ is the identity matrix so ${\rm tr}(I)$ is the total dimension of the Hilbert space.
We will assume that this obeys a \emph{scaling function}
\be
P^>(\nbos,x) \leq N^{-g(x) \nbos}.
\ee
(A stronger assumption is that this equation holds with equality up to subleading corrections; we conjecture that this is true for the function $g(x)$ in \cref{fcbound} below if $\emaxfuture$ is a tight bound on the largest eigenvalue up to subleading corrections.)
For example, at $x=1$, this probability is $1/N^{\nbos+o(1)}$.  Indeed, ${\rm tr}(I)$ is equal to $N^{\nbos+o(1)}$, where one should
note that a term such as $\nbos^\nbos$ is, by the assumption that $N$ is taken large first, equal to $N^{o(1)}$ and expect ${\rm tr}(\Pi^+)$ is small with only $N^{\o(1)}$ eigenvalues that large.
We claim that the bounds proven in lemma 4, equation (26)\footnote{This lemma is essentially a reproof of the matrix Chernoff bound in this particular case.} of Ref.~\cite{hastings2020classical} lead to
\be
\label{fcbound}
g(x)=x^2+o(1).
\ee
We call this bound a \emph{bound on the density of states.}
Indeed, the way that these bounds were used in lemma 4 of Ref.~\cite{hastings2020classical} is as follows.  These are bounds on the expectation value of a quantity $Z(\tau)=\expec[{\rm tr}(\exp[\tau H])]$ for some scalar $\tau$, for $\lambda=0$; the quantity $Z$ depends also on $N,\nbos,p$ but we only make the $\tau$ dependence explicit here.  We have the trivial lower bound that
${\rm tr}(\exp[\tau H])\geq \exp(x \emax \tau) {\rm tr}(\Pi^+)$ for any $H$, and so
$Z(\tau)\geq \exp(x \emax \tau) P^>(\nbos,x) {\rm tr}(I)$, so 
$$P^>(\nbos,x) \leq Z(\tau) \exp(-x \emax \tau)/{\rm tr}(I).$$
In \cite{hastings2020classical}, we considered $x=1$, and then a choice of $\tau$ was optimized over to minimize the right-hand side for given $\emax$; in this way, we were able to prove that for the given $\emax$ the right-hand side was less than $1$, and in this way we were able to prove (after one more steps) that the largest eigenvalue was, with high probability, not much larger than $\emax$.
So, one may do a similar thing here, taking $x<1$, and seeing how the right-hand side behaves, giving the claimed result \cref{fcbound}.

\section{The Algorithm}
\label{algorithm}
In this section, we give the quantum and classical algorithms which realize a quadratic speedup over previous results.

First, we recall the trick of \cite{schmidhuber2025quartic1}.
Given the tensor $\Tspike$ in the problem, define two new tensors $\Tplus,\Tminus$ by adding more noise to the problem, and then applying a rescaling.
Let $G'$ be some other Gaussian noise chosen from the same distribution as $G$.
Let $\Tplus=\frac{T+\zeta H'}{\sqrt{1+\zeta^2}}$ and let $\Tminus=\frac{T-\zeta^{-1} H'}{\sqrt{1+\zeta^2}}$, for some scalar $\zeta$.
By adding this extra noise, we find that now the Gaussian noise in $\Tplus$ is \emph{uncorrelated} from that in $\Tminus$.
That is, the distribution of $\Tplus,\Tminus$ is the same as if we had chosen
\be
\Tplus=\lambda^+ \lambda \vsig^{\otimes p} + G^+,
\ee
and
\be
\label{Tmdef}
\Tminus=\lambda^- \vsig^{\otimes p} +G^-,
\ee
where $G^+,G-$ are independent of each other and both chosen from the same distribution as $G$.
Here
\be
\lambda^{\pm}=\lambda (1+\zeta^{\pm 2})^{-1/2}.
\ee
Similarly, let $\overline \lambda^{\pm}=\overline \lambda (1+\zeta^{\pm 2})^{-1/2}$.

We will use choose $\zeta=1/\log(N)$.  
Indeed, any choice of $\zeta$ would work so long as it is $o(1)$ but decays subpolynomially in $N$.
Then, $\lambda^{+}$ is only slightly smaller than $\lambda$, while $\lambda^-$ is smaller than $\lambda$ by a polylogarithmic amount.
This is crucial: reducing $\lambda^+$ can force us to increase $\nbos$ and the algorithm runtime is exponential in $\nbos$, so we minimize the reduction in $\lambda^+$, while reducing $\lambda^-$ has only a small effect.

For  use later, it is convenient to add \emph{more} noise to $\Tminus$.  This noise is pure imaginary.  We do this so that $G^-$ is chosen from a complex distribution, while $G^+$ is chosen  from the real distribution.

Remark: in practice, likely one would want to not add this pure imaginary noise to $\Tminus$, and indeed likely one would want to simply take $\Tminus=\Tplus=\Tspike$ so that no additional noise is added.  However, adding this noise simplifies the analysis.

We now give the classical algorithm for detection.
The algorithms rely on a chosen input state
\be
\psiinput=\frac{1}{|\Tminus|^{\nbos/4}} |\Tminus\rangle^{\otimes \nbos/4}.
\ee
Assuming $\gamma=3/2$ and \cref{fcbound} on the density of states holds,
the quantity $\nbos$ needed will be only half as large (up to multiplicative $o(1)$ corrections) as the $\nboseq$ needed in \cref{specalg}.

The algorithms all also rely on certain approximate projectors.  In an idealized world, this projector would simply be a projector onto an eigenspace with eigenvalue greater than or equal to some given value.  However, in reality, we need to replace this with an approximate projector.  For the classical algorithm, this is because we would use some form of Lanczos to compute the approximate projection of some state onto some eigenspace and we have a finite number of Lanczos steps.  For the quantum algorithm, this is because we need to do some form of phase estimation to implement the projector, phase estimating the eigenvalue, determining whether it exceeds some cutoff, and then uncomputing the eigenvalue.  For our purposes, we need a few properties of the projector.  The projector is diagonal in the eigenspace of $H$.  There is some ``lower cutoff" $E_{\rm lower}$ and ``upper cutoff" $E_{\rm upper}$ with $E_{\rm lower}<E_{\rm upper}$.  In the eigenspace of eigenvalues greater than the upper cutoff, the projector is close to $1$; indeed, we require that its difference from $1$ be much smaller than $N^{-\nbos}$ so any error there is negligible.  In the eigenspace of eigenvalues smaller than the lower cutoff, the projector is similarly close to $0$; its difference from $0$ is much smaller than $N^{-\nbos}$.  Finally, we want the difference between upper and lower cutoff to be polynomially small in $N$, for example $1/N$ would suffice\footnote{We claim that these approximate projectors can be implemented classically in a time bounded by the Hilbert space dimension times a polynomial in $N,\nbos$.  Indeed, use sparse matrix-vector methods to implement the approximate projector.  A polynomial of degree at most polynomial in $N,\nbos$ can implement such an approximate projector.  Similarly, the time to implement the projector quantumly is polynomial in $N,\nbos$ using phase estimation.}.  From now on, when we say that we have an approximate projector onto a certain eigenspace with eigenvalue greater than something, this eigenvalue can be taken to be the upper cutoff.

Finally, later we will prove results either lower bounding or upper bounding the projection of some state onto some eigenspace; if we prove some lower bound, then this is lower bounding the projection onto the eigenspace with eigenvalue greater than the upper cutoff, while if we prove an upper bound, then this is upper bounding the projection onto the eigenspace with eigenvalue greater than the lower cutoff.  That way, in both cases it implies some lower or upper bound on the norm of the state after applying the approximate projector.  We will not state this explicitly later.  The difference between lower and upper cutoff is small enough that it is negligible when applying these results.

For use later, we denote this approximate projector $\tilde \Pi$.  If needed later, we will further write this projector as $\tilde \Pi(\nbos)$ to indicate the particular $\nbos$.

We also define a projector $\Pis$ onto the symmetric subspace of the full Hilbert space.  In practice, symmetrization may not be necessary but it is convenient to consider.

The algorithm depends on a quantity $\normmin$, from \cref{normminlemma}.  This is an estimate of the norm-squared of the projection onto the given eigenspace in the spiked case.  It is $N^{-\nbos/2}/\tilde O(1)$.

\begin{algorithm}
\caption{Classical algorithm with quadratic speedup}  \begin{itemize}
\item[1.] Take the chosen input state, project it onto the symmetric subspace, and compute its approximate projection onto the eigenspace of $H(T^+)$ with
eigenvalue greater than or equal to 
$(1-c') \overline \lambda^+ N^2 \nbos (\nbos-1)$.  The quantity $c'$ is $o(1)$ and is the same $c'$ as in the statement of
\cref{eigenfluc}.

\item[2.] If the projection onto this eigenspace has norm-squared greater than or equal to $\normmin$ then report $\lambda=\overline \lambda$, otherwise report $\lambda=0$.

\item[3.] If reporting $\lambda=\overline \lambda$, compute the single particle density matrix of this projection and apply the same randomized recovery algorithm to approximately recover $\vsig$.
\end{itemize}
\label{classquad}
\end{algorithm}

This classical algorithm relies on approximate projection into an eigenspace.  This can be computed using Lanczos methods in time $\tilde O(N^\nbos)$, giving the quadratic speedup for $\nbos$ half as large as previously, using space $\tilde O(N^\nbos)$.

We also give the quantum algorithm.  We begin with the unamplified version, \cref{qquadunamp}.

\begin{algorithm}
\caption{Quantum algorithm, unamplified version}
\begin{itemize}
\item[1.] Take the chosen input state, project it onto the symmetric subspace, and use phase estimation to approximately project it onto the eigenspace of $H(T^+)$ with
eigenvalue greater than or equal to 
$(1-c') \overline\lambda^+ N^2 \nbos (\nbos-1)$.  The quantity $c'$ is $o(1)$ and is the same $c'$ as in the statement of
\cref{eigenfluc}.  Repeat this step until it succeeds, or until it is repeated $c''/\normmin$ times, for some $c''$ which is $\omega(1)$ but $\tilde O(1)$.

\item[2.] If the projection onto this eigenspace succeeds, report 
$\lambda=\overline \lambda$, otherwise report $\lambda=0$.

\item[3.] If the projection does succeed, compute the single particle density matrix of this projection and apply the same randomized recovery algorithm to approximately recover $\vsig$.
\end{itemize}
\label{qquadunamp}
\end{algorithm}
Note that if the projection onto the eigenspace has norm squared at least $\normmin$, then it succeeds with probability $1-o(1)$.

The quantum algorithm uses phase estimation to compute the approximate projection.  The techniques for doing this to sufficient accuracy are the same as in \cite{hastings2020classical}.
It is essential that the error be small compared to $N^{-\nbos}$.
The runtime of the algorithm is then equal to the number of trials, multiplied by the time for each trial.
Assuming $\gamma=3/2$ and \cref{fcbound} holds, we will see that this time is $\tilde O(N^{\nbos/2})$ so it is quadratically faster than \cref{classquad}.
We can then wrap the algorithm in amplitude amplification, to obtain \cref{qquadamp}.

\begin{algorithm}
\caption{Quantum algorithm, amplified version}
\begin{itemize}
\item[1.] Apply amplitude amplification to step {\bf (1.)} of \cref{qquadamp} so that if the projection succeeds with probability $\normmin$, one obtains, with probability $1-o(1)$, a vector with large norm on the given eigenspace.

\item[2.] If amplitude amplification succeeds, report 
$\lambda=\overline \lambda$, otherwise report $\lambda=0$.

\item[3.] If amplitude amplification does succeed, compute the single particle density matrix of the resulting state and apply the same randomized recovery algorithm to approximately recover $\vsig$.
\end{itemize}
\label{qquadamp}
\end{algorithm}

To understand this algorithm, note that, in the spiked case, the quantity $\overline\lambda^+ N^2 \nbos (\nbos-1)$ is the expectation value of $H(T^+)$ for on the state
\be
\psiideal=\frac{1}{N^{\nbos/2}}|\vsig\rangle^{\otimes \nbos'},
\ee
up to small fluctuations due to the Gaussian noise.  In \cref{ideal}, we consider this in more detail, considering the fluctuations in energy for this ``ideal" input state, but the result is to show that such an ideal input state will have a large probability to be in the given eigenspace; see \cref{eigenfluc}.
In \cref{decorrnoise}, we show that, with high probability, the input state $\psiinput$ has probability at least $\normmin$ to be in the given eigenspace.  Heuristically, this is because $\psiinput$ is the ideal input state, which is in the given eigenspace, plus some noise, and we can bound the amount of noise.
So, this will show that in the spiked case, the algorithm correctly reports $\lambda=\overline \lambda$ with high probability, regardless of the value of $\nbos$.

Now consider the unspiked case.  We first need the following \cref{uniform} showing that the average of the input state over noise is maximally mixed after projecting to the symmetric subspace.
Recall that $\Pis$ is the projector onto the symmetric subspace so it is the identity matrix on this subspace on $\nbos$ bosons.
\begin{lemma}
\label{uniform}
Consider
$\expec[\Pis (|G\rangle\langle G|)^{\otimes \nbos/p}\Pis],$
where we consider the expectation value over $G$  drawn from the symmetric, complex distribution.  This expectation value equals
$(\nbos/p)! \Pis.$ 

Similarly, consider the expectation value
$\expec[\frac{1}{|G|^{2\nbos/p}} \Pis (|G\rangle\langle G|)^{\otimes \nbos/p}\Pis].$  This is the expectation value of a \emph{normalized} state.  This
equals $(1+o(1)) \rhosymmmm,$ where $\rhosymmmm$ is the maximally mixed state on the symmetric subspace.
\begin{proof}
The quantity$|G\rangle\langle G|)^{\otimes \nbos/p}$ is a polynomial of degree $\nbos/p$ in $G$ and of degree $\nbos/p$ in the complex conjugate $\overline G$.  We may compute the expectation value using Wick's theorem as a total of $(\nbos/p)!$ terms.  Each such term is given by applying some permutation $\pi$ to the outgoing indices of the state $(I_{p,{\rm symm}})^{\otimes \nbos/p}$, where $I_{p,{\rm symm}}$ is the identity matrix in the symmetric subspace on $p$ bosons.  (Outgoing indices refers to those indices on the ket of this density matrix.)  
All terms gives the same contribution to 
$\Pis (|G\rangle\langle G|)^{\otimes \nbos/p}\Pis$, since $\Pis$ is given by averaging over permutations of indices.  So,
$\expec[\Pis (|G\rangle\langle G|)^{\otimes \nbos/p}\Pis]=(\nbos/p)!
\Pis (I_{p,{\rm symm}})^{\otimes \nbos/p} \Pis =(\nbos/p)! \Pis I_p^{\otimes \nbos/p} \Pi_s $, where $I_p$ is the identity matrix in the full space on $p$ bosons, as the non-symmetric components of $I_p$ are annihilated by $\Pis$.  However, $I_p^{\otimes \nbos/p}$ is the identity matrix on the full subspace so then
$\Pis (I_{p,{\rm symm}})^{\otimes \nbos/p} \Pis = \Pis.$

The result for the normalized follows since the norm concentrates about its mean so the fluctuations in norm may be ignored.  Further, since the norm of the state is uncorrelated from the normalized state, the average of the normalized state is still proportional to maximally mixed state.
\end{proof}
\end{lemma}

Now we may show
\begin{lemma}
Suppose the scaling assumption  on the largest eigenvalue holds with $\gamma=3/2$ and suppose \cref{fcbound} holds.
Then
for $\nbos=(1/2) \nboseq \cdot (1+c''')$, for some $c'''=o(1)$, in the unspiked case the algorithm correctly reports that it is unspiked with high probability.
\begin{proof}
From the scaling assumption on the largest eigenvalue, at $\nbos=(1/2) \nboseq$, we are projecting onto an eigenspace with eigenvalue at least $(1/\sqrt{2})\emax$ up to subleading corrections.  From \cref{fcbound}, the dimension of this eigenspace is equal to the dimension of the symmetric space times $\tilde O(1/N^{-(1/2) \nbos})$.
From \cref{uniform}, the expectation value over noise in $G^-$ of the quantum expectation value of the approximate projector is
$O(1/N^{-(1/2) \nbos \cdot (1-o(1))})$.
Here there are two expectation values: we take the expectation value over noise in $G^-$ and we take the expectation value of the approximate projector in the given input state.
So, with high probability, the norm squared of the projection of the input state onto the given eigenspace is
$O(1/N^{-(1/2) \nbos \cdot (1-o(1))})$.  This is for $\nbos=(1/2) \nboseq$ and in this case we have the projection equal to $\normmin$ up to subleading corrections.
So, for $\nbos=(1/2) \nboseq \cdot (1+c''')$, we may choose $c'''=o(1)$ so that the projection is smaller than $\normmin$ with high probability.
\end{proof}
\end{lemma}

Let us emphasize the use of our scaling assumptions here.  They play the same role as the ``Alice theorem" of \cite{schmidhuber2025quartic1}.
That is, our tightest proven bound on the largest eigenvalue (used to prove the correctness of the algorithms of \cite{hastings2020classical}, and similar bounds used to prove the correctness of the algorithms of \cite{wein2019kikuchi}) obeys the scaling assumption with $\gamma=3/2$, and the proof of these bounds can be used to prove \cref{fcbound} for the given $\emaxfuture=\emax$.  Thus, relative to our best existing bounds for the speed of the previous spectral algorithms, these assumptions do hold, and so indeed we may claim (once some other results in this paper are proven) a quadratic speedup for the performance of both classical and quantum algorithms relative to the best previously proven performance.  The reason for expressing the bounds in this form is that if someone does somehow tighten the bounds on the spectrum, then we will still have a quadratic speedup so long as the scaling assumption holds with $\gamma=3/2$ and so long as \cref{fcbound} for these improved bounds on the largest eigenvalue.

\section{Success Probability for Ideal Input State}
\label{ideal}
Now we consider the state
 $\psiideal=\frac{1}{N^{\nbos/2}}|\vsig\rangle^{\otimes \nbos}$, and address the question: if we measure the energy of this state using $H(\Tplus)$,
 what is the probability distribution of outcomes?  In physics terms, this would be called the spectral weight.

Let $P_{n,k}=n!/(n-k)!$, denoting the number of ways to choose $k$ items out of $n$ items, where the order in which the $k$ items are chosen matters.
We have $P_{n,2}=n(n-1)$.

Then, one may verify that
\be
\langle \psiideal | H| \psiideal \rangle=P_{\nbos',2} (\vsig)_\mu (\vsig)_\nu (\vsig)_\sigma (\Tplus)_{\mu\nu\rho\sigma}.
\ee

For simplicity of writing the results, let us rotate the basis for the $N$-component vectors so that $(\vsig)_0=\sqrt{N}$ and all other components of $\vsig$ vanish.  (Of course, one does not know this basis when running the algorithm, but since the Gaussian distribution  is rotation invariant, this does not change the probability distribution of the quantities we calculate here.)
In this case, we have
\be
\label{hexpec}
\langle \psiideal | H| \psiideal \rangle=P_{\nbos',2} (\Tplus)_{0000}.
\ee

Also, with the same rotation
\begin{eqnarray}
\label{hsqexpec}
\langle \psiideal | H^2| \psiideal \rangle &=& P_{\nbos',4} (\Tplus)_{0000}^2 \\ \nonumber
&&+ 4 P_{\nbos',3} (\Tplus)_{000\alpha} (\Tplus)_{000\alpha} 
\\ \nonumber &&+ 2 P_{\nbos',2} (\Tplus)_{00\alpha\beta}(\Tplus)_{00\alpha\beta} .
\end{eqnarray}
This result can be understood most simply in terms of the first-quantized form of the Hamiltonian.
The right-hand side in the first line above is where the indices $i_1,i_2$
 of the first $H$ (reading from left to right in $H^2$) are all distinct from those in the second $H$; we may say they ``act on different bosons".
In the second line, one index in the first $H$ agrees with an index in the second $H$.
In the third line, both indices in the first $H$ agree with an index in the second $H$, with the first of these lines having those indices in different blocks of four bosons, while the second of these lines has them in the same block.

So, it is immediate that
$\expec[\langle \psiideal | H| \psiideal \rangle]=\lambda N^2 P_{\nbos',2}$, where we take the expectation over choices of $G^+$, since the expectation of $G^+_{0000}$ vanishes.
Similarly, 
\begin{eqnarray}\expec[\langle \psiideal | H| \psiideal \rangle^2]&=&\Bigl(\lambda N^2 P_{\nbos',2}\Bigr)^2+(1+\zeta^2)
\\ \nonumber
&=& \expec[\langle \psiideal | H| \psiideal \rangle]^2+(1+\zeta^2) .
\end{eqnarray}
The second term is asymptotically negligible compared to the first, so:
\begin{lemma}
In the limit of large $N$ at fixed
 $\lambda N$, with $\zeta=O(1)$, 
 with high probability the quantity
$\langle \psiideal | H| \psiideal \rangle$ is within a factor $1+o(1)$ of its mean, i.e.,it is in the interval $[(1-o(1)) \lambda N^2 P_{\nbos',2},(1+o(1)) \lambda P_{\nbos',2}]$.
\end{lemma}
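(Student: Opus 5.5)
The plan is a second-moment (Chebyshev) argument on the scalar $X\equiv\langle \psiideal | H| \psiideal \rangle$, where $H=H(\Tplus)$. By \cref{hexpec}, after the rotation that sends $\vsig$ to $\sqrt{N}$ times the first basis vector, $X=P_{\nbos',2}(\Tplus)_{0000}$. This reduces the problem to a single Gaussian entry, because $(\Tplus)_{0000}=\lambda^+ N^2+G^+_{0000}$. Here the signal part is $\lambda^+ (\vsig)_0^4=\lambda^+ N^2$, and $G^+_{0000}$ is a real Gaussian with mean zero and variance $O(1)$. The diagonal entry of the symmetrized noise has variance at most a constant, and the rescaling by $\sqrt{1+\zeta^2}$ only changes constants. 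Rotation invariance of the Gaussian ensemble guarantees that the rotation does not change the distribution of $X$.

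First I would write down the mean, $\expec[X]=\lambda^+ N^2 P_{\nbos',2}$. With $\zeta=O(1)$, and in particular with the chosen $\zeta=1/\log N$, $\lambda^+=\lambda(1+o(1))$ when $\zeta=o(1)$, and $\lambda^+=\Theta(\lambda)$ in general. The statement's interval should then be read with $\lambda^+$ (equivalently $\lambda$ up to the $1+o(1)$ factor when $\zeta=o(1)$), and the missing $N^2$ in the upper endpoint is a typo.

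Next I would bound the variance: ${\rm Var}[X]=P_{\nbos',2}^2\,{\rm Var}[G^+_{0000}]=O(P_{\nbos',2}^2)$. I note that the displayed identity in the paper drops the $P_{\nbos',2}^2$ prefactor in front of $(1+\zeta^2)$, but this prefactor cancels in the relative comparison. The ratio of standard deviation to mean is therefore
$$\frac{\sqrt{{\rm Var}[X]}}{\expec[X]}=O\Bigl(\frac{1}{\lambda^+ N^2}\Bigr).$$
In the scaling limit $\lambda N$ is fixed, so $\lambda^+ N^2=\Theta(N)\to\infty$ and the ratio is $O(1/N)=o(1)$.

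Chebyshev's inequality, or better the Gaussian tail of the single variable $G^+_{0000}$, then gives
$$|X-\expec[X]|\le \epsilon\,\expec[X]$$
with probability $1-O\bigl(1/(\epsilon^2 N^2)\bigr)$. Choosing $\epsilon=N^{-1/2}$, for example, yields both $\epsilon=o(1)$ and failure probability $o(1)$, which is the claimed interval.

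There is no real obstacle here. The only points needing care are bookkeeping: confirming that the symmetrization of $G$ and the $\zeta$-mixing leave ${\rm Var}[G^+_{0000}]=O(1)$ independent of $N$, and tracking $\lambda^+$ versus $\lambda$. Note that the statement is about the expectation value $X$ only; it does not yet say that most spectral weight lies above the cutoff. That stronger claim needs the $H^2$ computation \cref{hsqexpec} and is deferred to the later \cref{eigenfluc}.
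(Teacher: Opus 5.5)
Your proposal is correct and follows the paper's own route. Like the paper, you reduce via \cref{hexpec} to the single entry $(\Tplus)_{0000}$, compute its mean and second moment, and observe that the variance is negligible against the squared mean at fixed $\lambda N$; your explicit Chebyshev step simply makes the paper's ``asymptotically negligible'' quantitative. Your bookkeeping remarks are also accurate: the $\lambda^+$ versus $\lambda$ distinction, the missing $N^2$ in the upper endpoint, and the dropped $P_{\nbos',2}^2$ prefactor on the variance term.
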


Now consider the expectation value of $H^2$.
\begin{eqnarray}
\expec[\langle \psiideal | H^2| \psiideal \rangle] &=& \lambda^2 N^4 P_{\nbos,2}^2+
 P_{\nbos',4} O(1)
  +P_{\nbos',3} O(N)
 +P_{\nbos',2} O(N^2) \\ \nonumber
 &=& \expec[\langle \psiideal | H| \psiideal \rangle]^2+
 P_{\nbos',4} O(1)
  +P_{\nbos',3} O(N)
 +P_{\nbos',2} O(N^2).
\end{eqnarray}

So, 
\be \expec[\langle \psiideal | H^2| \psiideal \rangle] -\expec[\langle \psiideal | H | \psiideal \rangle^2] =
P_{\nbos',4} O(1)
  +P_{\nbos',3} O(N)
 +P_{\nbos',2} O(N^2).
 \ee
Thus, in the limit of large $N$ at fixed $\lambda N$, the first and second terms on the right-hand side of this equation are negligible compared to
$\expec[\langle \psiideal | H| \psiideal \rangle]^2$
while the last term has the same scaling in $N$.  However, the last term is negligible compared to $\expec[\langle \psiideal | H| \psiideal \rangle]^2$ in the limit of large $\nbos'$.
Hence,
\begin{lemma}
\label{eigenfluc}
Take the limit of large $N$ at fixed
 $\lambda N$, with $\zeta=O(1)$, and then the limit of large $\nbos'$.  Then, with high probability, if $\psiideal$ is projected into an eigenbasis of $H$, the eigenvalue is within $1+o(1)$ of the mean, and hence the projection of $\psiideal$ onto the
 eigenspace with eigenvalue
$\geq 1-c' \lambda N^2 P_{\nbos',2}$, for some $c'=o(1)$, has norm $1-o(1)$.
 (Remark: there are of course two sources of randomness here, one in the choice of random Gaussian defining $H$ and the other in the outcome of the measurement.)
\end{lemma}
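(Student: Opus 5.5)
The plan is a second-moment (Chebyshev) argument on the spectral measure of $\psiideal$ with respect to $H=H(\Tplus)$. Write $\psiideal=\sum_k c_k \phi_k$ in an eigenbasis $H\phi_k=E_k\phi_k$, and define the random variable $\mathcal{E}$ that takes the value $E_k$ with probability $|c_k|^2$. This is the outcome of an energy measurement. Its quantum mean is $\langle \psiideal|H|\psiideal\rangle$, and its quantum variance is
\be
\sigma^2\equiv\langle \psiideal|H^2|\psiideal\rangle-\langle \psiideal|H|\psiideal\rangle^2 .
\ee
The projection of $\psiideal$ onto eigenvalues below $(1-c')\lambda N^2 P_{\nbos',2}$ has norm-squared equal to $\prob[\mathcal{E}<(1-c')\lambda N^2P_{\nbos',2}]$. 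It therefore suffices to show that, with high probability over $G^+$, this probability is $o(1)$.

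\textbf{Step 1 (the mean).} The previous lemma shows that, with high probability over $G^+$, $\langle\psiideal|H|\psiideal\rangle\geq(1-\epsilon_1)\lambda N^2P_{\nbos',2}$ for some $\epsilon_1=o(1)$. This follows from Chebyshev applied to $(\Tplus)_{0000}$, whose variance is $O(1)$, while the mean squared is $\Theta(\nbos'^4)$ at fixed $\lambda N$.

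\textbf{Step 2 (the quantum variance).} The computation just before the statement gives
\be
\expec[\sigma^2]=P_{\nbos',4}O(1)+P_{\nbos',3}O(N)+P_{\nbos',2}O(N^2).
\ee
Since $\sigma^2\geq 0$, Markov's inequality gives $\sigma^2\leq \omega(1)\cdot\expec[\sigma^2]$ with high probability, for any slowly growing $\omega(1)$. Compare this with $M^2\equiv(\lambda N^2 P_{\nbos',2})^2=\Theta(N^2\nbos'^4)$ at fixed $\lambda N$. The first two terms are $O(\nbos'^4)$ and $O(N\nbos'^3)$, which are $o(M^2)$ as $N\to\infty$. The last term is $O(N^2\nbos'^2)$, so its ratio to $M^2$ is $O(1/\nbos'^2)$, which vanishes in the subsequent limit of large $\nbos'$. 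Hence $\sigma^2/M^2\leq\epsilon_2$ with high probability, where $\epsilon_2\to 0$ in the stated order of limits. Here the Markov slack $\omega(1)$ is chosen to grow more slowly than $\nbos'^2$.

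\textbf{Step 3 (Chebyshev and the choice of $c'$).} Apply Chebyshev to $\mathcal{E}$: $\prob[\mathcal{E}\leq \langle H\rangle - tM]\leq \sigma^2/(t^2M^2)\leq\epsilon_2/t^2$. Choose $t=\epsilon_2^{1/3}$ and $c'=\epsilon_1+\epsilon_2^{1/3}=o(1)$. Then the weight below $(1-c')M$ is at most $\epsilon_2^{1/3}=o(1)$, so the projection has norm-squared, and hence norm, equal to $1-o(1)$. The same argument applied to the upper tail shows that the measured eigenvalue is within a factor $1+o(1)$ of the mean. The two sources of randomness are separated cleanly: the Gaussian $G^+$ is handled by Markov and Chebyshev on classical moments, and the measurement outcome by Chebyshev on the spectral measure.

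\textbf{Main obstacle.} The main obstacle is to justify the $O(\cdot)$ constants in $\expec[\sigma^2]$, specifically the $P_{\nbos',2}O(N^2)$ term. That term comes from $\sum_{\alpha\beta}|(\Tplus)_{00\alpha\beta}|^2$, which includes the spike contribution at $\alpha=\beta=0$ and $N^2$ Gaussian terms. I would check that the Wick expectation gives exactly $O(N^2)$ with a $\nbos'$-independent constant. I would also check that the cross terms in $P_{\nbos',4}(\Tplus)_{0000}^2-P_{\nbos',2}^2(\Tplus)_{0000}^2$ are indeed $O(\nbos'^3 N^2\lambda^2)$, hence lower order. This subtraction is where the leading $\nbos'^4$ pieces must cancel, since $P_{\nbos',4}-P_{\nbos',2}^2=O(\nbos'^3)$.
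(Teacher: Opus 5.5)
Your proposal is correct and follows the paper's own second-moment argument. The paper likewise bounds $\expec[\langle \psiideal|H^2|\psiideal\rangle]-\expec[\langle\psiideal|H|\psiideal\rangle^2]$ by $P_{\nbos',4}O(1)+P_{\nbos',3}O(N)+P_{\nbos',2}O(N^2)$ and compares it with $(\lambda N^2P_{\nbos',2})^2$. It leaves implicit the split you make explicit: Chebyshev for $\langle H\rangle$ over $G^+$, Markov for the quantum variance, and Chebyshev on the spectral measure.

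There are two harmless slips. First, at fixed $\lambda N$ the mean square of $(\Tplus)_{0000}$ is $\Theta(N^2)$, not $\Theta(\nbos'^4)$, which only strengthens Step 1. Second, the cancellation you flag as the main obstacle is exact: $P_{\nbos',4}+4P_{\nbos',3}+2P_{\nbos',2}=P_{\nbos',2}^2$, and equivalently $\psiideal$ is an eigenstate of the spike part, so $\sigma^2$ depends only on $G^+$.
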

Here we have just used the second moment of $H$ to bound fluctuations in $H$; we could use higher moments to get tighter tail bounds if desired.

\section{Success Probability for Chosen State}
\label{decorrnoise}
Recall that we have chosen $\Tminus$ to be from the complex distribution even though $\Tplus$ is from the real distribution.  This is to simplify calculations in this section.  It is likely not needed.  At worst, it gives up a factor in the success probability which is a constant factor raised to the power $\nbos$, which is negligible compared to polynomials in $N^{\nbos}$.

Our  approach is analogous to the Gaussian version of theorem 38 of  \cite{schmidhuber2025quartic1}, with the Gaussian version being used later in that paper.  Indeed, we could perhaps just directly import their result, 

Consider
\be
\psiinput=\frac{1}{|\Tminus|^{\nbos/4}} |\Tminus\rangle^{\otimes \nbos/4}.
\ee

Let $\psiinputun$ denote the unnormalized input state
\be
\psiinputun=\|\Tminus\rangle^{\otimes \nbos/4}.
\ee
The norm
$|\Tminus|$ concentrates about its mean in the limit of large $N$.  This mean is 
$(\lambda^2 N^4 + s^+ N^4)^{1/2}$.
 So $\psiinput \approx (\lambda^2 N^4 + s^+ N^4)^{-\nbos/8} \psiinputun.$
 
For any $v$ we have
\be
\label{expecnaive}
\expec[ |\langle v | \psiinputun \rangle|^2] \geq |\langle v | \vsig^{\otimes \nbos} \rangle|.
\ee
To see this, note that the left-hand side is a polynomial in the random Gaussian $G^-$, with all coefficients in the polynomial positive.  So, in expectation it must be greater than the zeroth order term, which is given on the right-hand side.
This holds in particular for $v$ being the (approximate) projection of $\psiideal$ onto the eigenspace with eigenvalue greater than or equal to the upper cutoff.

So,
\be
\label{expecnaive2}
\expec[ \langle v | \psiinputun \rangle|^2] \geq (\lambda N^2)^{\nbos/4} +{\rm negligible \, error},
\ee
where the negligible error is due to small projection of the ideal input state onto the eigenspace below this cutoff and due to small error in the approximate projector.
\cref{expecnaive}, combined the concentration of $|\Tminus|$ about its mean, gives a lower bound on the average of probability of success when measuring
$\tilde \Pi$ on the input state.  
We will pick $\normmin$ to be slightly smaller this value, namely equal tot this value times $1/\tilde O(1)$.

However, we want to ensure that for a random choice of $G^-$, the success probability is, with high probability, not much smaller than its mean.  That is, we want to bound fluctuations.

Here, a simple second moment method works.
So, we will bound
$$\expec[ |\langle v | \psiinputun \rangle|^4]=\expec[ \langle v |\psiinputun\rangle \langle v |\psiinputun\rangle \langle \psiinputun| v \rangle  \langle \psiinputun| v \rangle .
$$ 
We have explicitly written the right-hand side out as a product of four factors, the first two being a polynomial in $G^-$ and the second two in its complex conjugate $\overline G^-$.  We compute the expectation value using Wick's theorem, contracting factors of $G^-$ with factors of $\overline G^-$.

Suppose we consider some number, $k_{13}$ of factors of $G^-$ in the first one with the same number of factors of $\overline G^-$ in the third one, and similarly contract $k_{24}$ factors between the second and third, while contracting no factors between the first and fourth or second and third.
There are ${\nbos/4 \choose k_{13}}^2 k_{13}! {\nbos/4 \choose k_{24}}^2 k_{24}!$ possible such contractions.
To bound each such contraction, consider first  the case that all such contractions involve the ``same" boson.   Here, we are working in the full Hilbert space and the ``same" boson means contracting, e.g., a factor of $G^-$ in the $i$-th tensor factor with the factor of $\overline G^-$ also in the $i$-th tensor factor.
Then the contribution to the expectation value of the product of the first and third factor is equal to
the expectation value of the projector $|v\rangle\langle v|$ on an input state which is maximally mixed on $k_{13}$ sets of four bosons (times an overall normalization of $N^4$ on each set) and a product state of $\vsig^{\otimes 4}$ on the remaining $\nbos/4-k_{13}$ sets of four bosons bosons, and similarly for the product of the second and fourth factor.  
We may write this as
more generally as ${\rm tr}(|w\rangle\langle w|\rho)$ where $w=v \otimes v$ and $\rho = \rho_{13} \otimes \rho_{24}$ with $\rho_{13},\rho_{24}$ being those two given input states.
Note that $|w\rangle\langle w|$ is a pure state, so its trace with $\rho$ is bounded by the operator norm of $\rho$ which is
$|\vsig|^{\nbos-4k_{13}} |\vsig|^{\nbos-4k_{24}}$.
For large $N$, if $k_{13}+k_{24}>0$ this is asymptotically negligible compared to the term with $k_{13}=k_{24}=0$, and even the sum over all such terms with $k_{13}+k_{24}>0$ is asymptotically negligible as the number of terms is only exponential in $\nbos$ and independent of $N$.
Now consider the more general case where some contractions involve ``different" bosons, as well as the case in which there are some number of factors $k_{14},k_{23}$ contracted between first and fourth or second and third term.
Any such term can be written as ${\rm tr}(|w\rangle\langle w| {\rm SWAP} \rho)$.  This again is bounded by the operator norm of $\rho$ and so again is asymptotically negligible for $k_{13}+k_{24}>0$.

 So,
 \begin{lemma}
 \label{normminlemma}
 With high probability, for
 $$\normmin=\frac{1}{\tilde O(1)}    N^{-\nbos/2}/,$$
 in the spiked case, the projection by $\tilde \Pi$ succeeds with probability at least $\normmin$.
\end{lemma}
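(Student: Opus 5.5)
Let $v$ denote the approximate projection $\tilde\Pi\,\psiideal$, or more precisely the vector used in \cref{expecnaive}. By \cref{eigenfluc}, $v$ has norm $1-o(1)$, and up to the approximate-projector error it lies in the range of $\tilde\Pi$. The success probability on input $\psiinput$ is $\langle\psiinput|\tilde\Pi^\dagger\tilde\Pi|\psiinput\rangle$. Since $\tilde\Pi$ is diagonal in the eigenbasis with entries in $[0,1]$ up to negligible error, this is at least $|\langle v|\psiinput\rangle|^2/|v|^2$ up to negligible corrections. So it suffices to show that $X\equiv|\langle v|\psiinputun\rangle|^2$ is, with high probability, at least a $1/\tilde O(1)$ fraction of its mean. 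We then combine this with the concentration of $|\Tminus|$ about $(\lambda^2N^4+s^+N^4)^{1/2}$ to pass from the unnormalized to the normalized input state.

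\textbf{Step 1 (first moment).} From \cref{expecnaive,expecnaive2}, $\expec[X]\ge |\langle v|\vsig^{\otimes\nbos}\rangle|^2\approx N^{\nbos}$ times $(\lambda^-)^{\nbos/2}$-type factors. After dividing by $|\Tminus|^{\nbos/2}$, the fixed scaling $\lambda N=\Theta(1)$ together with $\lambda^-=\lambda/\mathrm{polylog}(N)$ gives $N^{-\nbos/2}\cdot N^{-o(\nbos)}$. The extra polylogarithmic factor raised to the power $\nbos$ is $\tilde O(1)$ by our convention. This fixes $\normmin$, up to the $1/\tilde O(1)$ slack.

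\textbf{Step 2 (second moment).} We show $\expec[X^2]\le(1+o(1))\,\expec[X]^2$, or at least $\le \tilde O(1)\,\expec[X]^2$. Expand $\langle v|\psiinputun\rangle$ as a polynomial in $G^-$, with $\nbos/4$ factors of $\Tminus=\lambda^-\vsig^{\otimes4}+G^-$. Then compute $\expec[X^2]$ by Wick's theorem, pairing $G^-$ factors in the first two copies with $\overline G^-$ factors in the last two; the complex ensemble ensures only $G$–$\overline G$ contractions survive. Organize the terms by the numbers $k_{13},k_{24},k_{14},k_{23}$ of contracted blocks. The term with all $k$'s zero is $|\langle v|\vsig^{\otimes\nbos}\rangle|^4$, which is at most $\expec[X]^2$. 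Every other term can be written as $\mathrm{tr}(|w\rangle\langle w|\,\sigma\,\rho)$ with $w=v\otimes v$, $\sigma$ a permutation (for example SWAP) of tensor factors, and $\rho$ a product of $\vsig^{\otimes4}$ projectors and (unnormalized) identities on the contracted blocks. This trace is bounded by $\|\rho\|$, which is smaller than the leading term by a factor $N^{-\Omega(1)}$ per contracted block. The number of contraction patterns is at most $(\nbos!)^{O(1)}=N^{o(1)}$ for fixed $\nbos$ as $N\to\infty$, so the sum of non-leading terms is $o(\expec[X]^2)$.

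\textbf{Step 3 (conclusion).} By Paley–Zygmund, $\prob[X\ge \theta\,\expec X]\ge(1-\theta)^2\expec[X]^2/\expec[X^2]=1-o(1)$ for $\theta=1/\tilde O(1)$. Combining this with Step 1 and norm concentration gives the claimed $\normmin$ with high probability.

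The main obstacle is Step 2. The comparison of the $\|\rho\|$ bound against the leading term must be done carefully, because each contracted block replaces $\lambda^-\vsig^{\otimes4}$, whose overlap with $v$ is of size $(\lambda^-)N^2$, by an identity with operator norm $N^0$ times Gaussian variance. One needs the per-block ratio $1/((\lambda^-)^2N^4)$, times the combinatorial count, to be summably small; since $\lambda^-N^2\sim N/\mathrm{polylog}N$, this holds. I would first verify the counting for $k_{14}=k_{23}=0$ and then reduce the crossed contractions to it via the SWAP trick.
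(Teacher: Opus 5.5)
Your proposal is correct and follows essentially the same route as the paper: a first-moment lower bound from the all-spike (zeroth-order) term, then a second-moment bound via Wick contractions organized by $k_{13},k_{24},k_{14},k_{23}$. Each non-leading term is written as ${\rm tr}(|w\rangle\langle w|\,{\rm SWAP}\,\rho)$, bounded by $\|\rho\|$, and summed over an $N$-independent number of contraction patterns. Your bookkeeping is somewhat more careful than the paper's text: you track the $\lambda^-$ factors (per-block suppression $1/((\lambda^-)^2N^4)$) and make the Cauchy--Schwarz and Paley--Zygmund steps explicit. Paley--Zygmund must be taken with $\theta\to0$, which requires the $(1+o(1))$ second-moment bound rather than your weaker $\tilde O(1)$ alternative.
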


\section{Multi-Step Algorithm}
\label{multistep}
We now define the multistep quantum algorithm which gives a sixth-power speedup over classical.

Consider the system on $(1/2) \cdot (1+o(1)) \nboseq$ bosons used in the quantum algorithm of \cref{qquadamp}.  Let $\nbos(0)$ denote this number of bosons.  We subdivide this system into two halves.  We then subdivide each of those haves in two halves again, for some number, $k$ of steps so that it is divided into $2^k$ subsystems each of $1/2^k$ the size.  Let us write $\nbos(j)$ to denote the number of bosons after $j$ steps of subdivision.
The number $k$ is chosen small enough so that $\nbos/2^k$ is still much larger than $1$ so that our asymptotic analysis for large boson number applies.  As $k$ increases, the speedup asymptotically approaches the sixth-power.

If $\nbos/4$ is not a multiple of $2^k$, we replace this subdivision with an approximate subdivision in an obvious way: each time one subdivides a system in two halves, do the subdivision so that the two halves both have $0 \mod 4$ bosons and so that they either are the same size or differ by four bosons.  
For notational simplicity, we omit this detail in giving the system sizes below.

Now let us recursively define a \emph{preparation} algorithm for a system with some number $\nbos(j)$ bosons.  We will call this algorithm $A_j$, where $A_j$ prepares the system on $\nbos/2^j$ bosons.
The algorithm $A(k)$ will simply take the input state $\frac{1}{|\Tminus|^{\nbos(k)/4}} |\Tminus\rangle^{\otimes \nbos(k)/4}$, i.e., it is the same as the previous $\psiinput$, except on $\nbos(k)$ bosons.

For $j<k$, let us first describe an unamplified version of the algorithm $A(j)$.  This prepares takes the tensor product of the states on the two subsystems at level $j+1$ (each such subsystem prepared with $A(j+1)$), symmetrizes the state, and then acts with $\tilde \Pi(\nbos(j))$, repeating these steps until it succeeds.
The time to run this algorithm $A(j)$ is then equal to the time to run $A(j+1)$ times the inverse probability that the projection succeeds, up to subleading corrections.  (Of course, we must run $A(j+1)$ \emph{twice}, to prepare both subsystems, but this is simply a constant factor of $2$.) 
We can then apply amplitude amplification to this algorithm so that the time is 
 to run this algorithm $A(j)$ is then equal to the time to run $A(j+1)$ times the square-root of the inverse probability that the projection succeeds, up to subleading multiplicative corrections.   Call this probability that the projection succeeds $P(j)$.
Thus, the time to run $A(0)$ is equal to $\prod_{j=0}^{k-1} P(j)^{-1/2}$, up to subleading multiplicative corrections.

Let $Q(j)$ be the probability that the projection $\tilde \Pi(j)$ would succeed on $\nbos(j)$ bosons on the state 
 $\frac{1}{|\Tminus|^{\nbos(j)/4}} |\Tminus\rangle^{\otimes \nbos(j)/4}$ in the \emph{unspiked case}; this is in contrast to $P(j)$ which is a success probability on a state which has been prepared by the previous approximate projections.  This probability $Q(j)$ is $\leq N^{-(1-o(1))\cdot \nbos(j)/2^{j+1}}$ assuming the scaling assumption on the density of states; again we emphasize that relative to our best bound on the largest eigenvalue of the spiked Hamiltonian, we have this bound on the density of states and so the scaling assumption is needed only if the bound on largest eigenvalue is improved.
This estimate of the probability is based solely on the number of states with the given eigenvalue, i.e., the dimension of the Hilbert space approximately projected onto by $\tilde \Pi(j)$.  

Now consider instead the expectation of some operator
such as
$\ldots \Bigl(\tilde \Pi(j+1) \otimes \tilde \Pi(j+1) \Bigr) \tilde \Pi(j) \Bigl(\tilde \Pi(j+1) \otimes \tilde \Pi(j+1) \Bigr)\ldots$.
Such an operator represents the effect of previously applying other operators $\tilde \Pi(j+1),\ldots$ and so this expectation gives the probability that all
the projections succeed, i.e, it is $P(j) P(j+1)^2 P(j+2)^4 \ldots$.
Clearly, this expectation value on the maximally mixed state is bounded by $Q(j)$.  So, $P(j) P(j+1)^2 P(j+2)^4 \ldots \leq Q(j)$.

We claim that the fluctuations in $P(j)$ and $Q(j)$ are negligible, in a way a that we now quantify.  The argument we will use is not based on usual Milman or L\'evy arguments about concentration of measure on a high-dimensional sphere.  Instead we use a second moment method.
Consider some arbitrary positive semi-definite operator $O$.  In our case, it is a product of the approximate projectors $\tilde \Pi$.
Consider the expectation value of the square of the quantum expectation value of this operator in the \emph{unnormalized} state $|\Tminus\rangle^{\otimes \nbos(j)/4}$ in the full Hilbert space, i.e., $ \expec[\langle (\Tminus)^{\otimes \nbos(j)/4}| O |(\Tminus)^{\otimes \nbos(j)/4} \rangle^2]$.
  This expectation value can be computed using Wick's theorem, with $(2 \nbos(j)/4)!$ possible contractions.  One such contraction is equal ${\rm tr}(O)^2={\rm tr}(O^{\otimes 2})$, where the first trace is in the Hilbert space on $\nbos$ bosons and the second is in the Hilbert space on $2\nbos$ bosons.  All other contractions can be expressed as ${\rm tr}({\rm SWAP} \, O^{\otimes 2})$, where ${\rm SWAP}$ is some operator swapping bosons in some given permutation and which acts on $2\nbos$ bosons.
Since ${\rm SWAP}$ is unitary and hence has operator norm one, this trace is bounded by the trace norm of $O^{\otimes 2}$, which, since $O$ is positive semi-definite, is equal to ${\rm tr}(O^{\otimes 2}$.  So,
$$\expec[\langle( \Tminus)^{\otimes \nbos(j)/4}| O |(\Tminus)^{\otimes \nbos(j)/4} \rangle^2] \leq 
(2 \nbos(j)/4)! \expec[\langle (\Tminus)^{\otimes \nbos(j)/4}| O |(\Tminus)^{\otimes \nbos(j)/4} \rangle]^2.$$
So, for appropriate choice of the $O$, we can then bound $\expec[P(j)^2] \leq (2 \nbos(j)/4)! \expec[P(j)]^2$ and similarly for $Q(j)$.
So, with high probability $P(j)$ and $Q(j)$ are only $\tilde O(1)$ larger than their mean.
If there is some unlikely event that can lead to $P(j),Q(j)$ being much smaller than their mean, this will only speedup the algorithm as it will accelerate later steps as we will see.

We are choosing every step of the amplitude amplification, except the last (i.e., all $j>0$), so that it would succeed with probability close to $1$ in the unspiked case.  One expects that the success probability in the spiked case is roughly the same on these steps (one can estimate this success probability by how the spiked state differs from the unspiked), but if it is the case that it differs significantly in the spiked case, that simply allows detection at an even earlier step and so would allow an even larger speedup.  So we may assume that in both spiked and unspiked cases, the success probability of these projectors on all steps except the last is $\leq \tilde O(1) P(j)$.

In contrast, we choose the last step of amplitude amplification so that, with high probability, it would succeed only in the spiked case.
We can lower bound the success probability of this last step in the spiked case in the same way as we bounded it before.
That is, the state $\psiideal$ survives all the approximate projectors (up to $o(1)$ corrections), and so that gives a lower bound on the average of $P(j)$ in the spiked case.  We can then use the same Wick's theorem and second moment method to bound fluctuations; to do this we will take the vector $v$ to be the image of the input state under approximate projections $\tilde \Pi(0) (\tilde \Pi(1) \otimes \tilde \Pi(1)) \ldots$.
We then choose the success probability on the last step so that in the spiked case it exceeds this and in the unspiked it does not.
This cutoff is not $\normmin$ but may be taken to be $\normmin/\prod_{j=1}^k P(j)^{2^j}$ due to the probability of not succeeding on previous steps.

So, the runtime of the last step is inverse in the square-root of this cutoff, so it is $\normmin^{-1/2}\cdot \tilde O(1)$ times $\prod_{j=1}^k \sqrt{P(j)^{2^j}}$ times the time for the previous number of steps.
From, $P(j) P(j+1)^2 P(j+2)^4 \ldots \leq Q(j)$, we have that $P(j)\leq Q(j)/(P(j+1)^2 P(j+2)^4 \ldots)$.
Consider the dependence of the runtime on $P(1)$: a smaller $P(1)$ reduces the runtime, so for worst case we may take $P(1)=Q(1)/(P(2)^2 P(3)^4 \ldots)$.
Having taken that, we find in turn that a smaller $P(2)$ also reduces the runtime, and so on for all $j$, so we may take in worst case
 $P(j)= Q(j)/(P(j+1)^2 P(j+2)^4 \ldots)$ to bound the runtime.

 This gives an overall run time $\sqrt{\normmin}^{-1/2} \cdot \prod_{j=1}^k \sqrt{Q(j)} \cdot \tilde O(1)$, which gives a cubic speedup for large $k$.
Here note that $Q(j)=P(0)^{1/4^j}$.  The reason that it decays in this way with $j$ is a combination of two effects.  First, the Hilbert space dimension is decreasing do that the dimension of the Hilbert space on a system of length reduced by $1/2^j$ is equal to the full Hilbert space dimension raised to the power $1/2^j$.  Second, the approximate projector is projecting onto an eigenspace of eigenvalue greater than or equal to the largest eigenvalue times $1/2^{j+1}$ as the energy of the ideal state is decreasing more rapidly with $j$ than the largest eigenvalue.

\section{Recovery}
\label{recovery}
We finally consider recovery.  
We perform recovery by measuring the single particle density matrix of the state $\psi$ that survives the approximate projector $\tilde \Pi$.
Following \cite{wein2019kikuchi}, define the correlation of two vectors $x,y$ by
\be
{\rm corr}(x,y)=\frac{\langle x|y\rangle}{|x|\cdot |y|}.
\ee
It is possible to boost a weak correlation to a stronger correlation, as in \cite{wein2019kikuchi} and implicitly in \cite{montanari2014statistical}, using a tensor power algorithm.
In this manner, for $p=4$, so long as some vector $u$ has ${\rm corr}(u,\vsig)=\omega(N^{-1/6})$, we can obtain a vector with correlation $1-o(1)$ with $\vsig$ with high probability, which is termed strong recovery.
Using the randomized recovery algorithm, Algorithm 1 of \cite{hastings2020classical}, we can recover a $u$ with such a given correlation $\omega(N^{-1/6})$ so long as the single particle density matrix $\rho$ has $\langle \vsig | \rho | \vsig \rangle = \omega(N^{-1/3})$.

Let us break the Hamiltonian $H(\Tplus)$ into the sum of two terms:
$$H(\Tplus)=H(\lambda^{+} \vsig^{\otimes 4}) + H(G^+),$$
where the first term is the contribution of the spike vector to $\Tplus$ and the second term is due to the Gaussian noise.  Let us rotate the basis for the $N$-component vectors so that
$(\vsig)_0=\sqrt{N}$ and all remaining components vanish.
Then $H(\lambda^{+} \vsig^{\otimes 4})=(a^\dagger_0 a_0)^2-a^\dagger_0 a_0.$  We will upper bound the expectation value of this on a state $\psi$ with single particle density matrix $\rho$; the worst case is when with probability $\rho_{00}/\nbos$ all bosons are in state $0$ and otherwise none are.  Hence
\be
\langle \psi | H(\lambda^{+} \vsig^{\otimes 4}) | \psi \rangle \leq (\nbos-1) \langle \vsig | \rho | \vsig \rangle.
\ee

Thus, if we cannot achieve strong recovery by boosting a weak correlation of a vector obtained from the single particle density matrix,
we must have
$$\langle \psi | H(\lambda^{+} \vsig^{\otimes 4}) | \psi \rangle=O(\nbos N^{-1/6}).$$

What this means is, since the state $\psi$ survived the approximated projector $\tilde \Pi$, the expectation value of 
$H(G^+)$ in that state must be large enough; indeed, it must be close to the lower cutoff on eigenvalue for $\tilde \Pi$.  

The discussion from here on in this section is heuristic, rather than rigorous, and the goal is simply to give a plausibility argument that recovery will succeed without proving it.
Intuitively, indeed, it seems very surprising if it did not succeed: we have
shown that if we cannot achieve recovery, we have obtained a state whose expectation value for $H(G^+)$  is fairly large.   This is surprising as our main results above have been based on the small number of eigenstates of $H(G^+)$ with such a large eigenvalue, and so we have shown that for an unspiked tensor, the input state is unlikely to have such a large expectation value for $H(G^+)$ after phase estimation.
Here, however, our goal is recovery and so we have two differences if we wish to those this precisely: the input state has a spike and also the tensor used to define the Hamiltonian has a spike.  It would, however, be somewhat surprising if the spike in the tensor defining the Hamiltonian increased the probability of finding a large expectation value for the unspiked $H(G^+)$.  We leave a proof of recovery for future work.

\section{What is the density of states?}
The previous part of the paper all shows a speedup assuming a certain Alice theorem, which we had when the best bound on the largest eigenvalue was from matrix Chernoff bounds or similar.  However, now we have a tighter bound on the largest eigenvalue\cite{kothari2026smooth}.  Will we expect such a speedup to still exist?

First, since the original quartic quantum speedup\cite{hastings2020classical}, similar quartic quantum speedups have been developed for other problems such as \cite{schmidhuber2025quartic1,schmidhuber2025quartic2}.  The bounds used in those paper are all in the style of a matrix Chernoff bound, so relative to the best bound in those papers, we \emph{do} expect to have the needed density of states bound.  Of course, the problem is different in detail, but we conjecture that the techniques in this paper will carry over to those results too.  Of course, the bounds may change but this gives some optimism that either the bounds will be improved or the polynomial speedup will hold.

What about in the original tensor PCA problem?  What do we expect for the density of states there?  Lower bounds in Ref.~\cite{kothari2026smooth} indicate that there is some constant $c>0$ such that the number of eigenvalues at least $c$ times the largest eigenvalue is proportional to the dimension of the Hilbert space, indicating that the scaling assumption on the density of states does not hold.  However, it does not rule out the possibility that for some $c'$, with $c<c'<1$, the number of eigenvalues at least $c'$ times the largest eigenvalue is polynomially smaller than the dimension of the Hilbert space, which would still give some speedup (or increase in signal-to-noise ratio) using the techniques here.  We have performed numerical simulations (unpublished) of the case $\nbos=4,p=4$ at increasing $N$, and indeed we have found that there is some trend that, as $N$ increases, the number of eigenvalues which are at least some large fraction of the largest eigenvalue decreases.  The simulations are still quite inconclusive of the scaling at large $N$ but there at least is the possibility of some practical speedup.

Another case where it would be likely would be variants of the tensor PCA problem.  To motivate this, the intuition we have is that the needed property (density of states becoming polynomially smaller above some constant fraction of the maximum eigenvalue) is mostly likely to happen when $\nbos$ is large.  Decay in the density of states at large number of particles is standard, for example in the SYK model\cite{garcia2017analytical}.  Normally, there would of course be no reason to take $\nbos$ larger than $N$ in tensor PCA, as an $\epsilon$-net of size exponential in $N$ will find the spike vector, while spectral algorithms take time exponential in $\nbos$.  However, we can imagine taking both $p$ and $N$ large and then the most naive $\epsilon$-net methods will take time exponential in $Np$ so it may become useful in this case to take $\nbos$ large in which case the density of states decay is expected.

\bibliography{superquart}

\end{document}